\newtheorem{thm}{Theorem}
\DeclareMathAlphabet{\bit}{OML}{cmm}{b}{it}
\def\<{\leqslant}           
\def\>{\geqslant}           
\def\d{\partial}
\def\wh{\widehat}
\def\wt{\widetilde}
\def\Re{\mathrm{Re}}   
\def\mR{{\mathbb R}}    
\def\mC{\mathbb{C}}    
\def\Tr{\mathrm{Tr}}       
\def\sign{\mathrm{sign}}       
\def\rT{{\rm T}}        
\def\diam{\diamond}       
\def\bE{\mathbf{E}}    
\def\bra{{\langle}}
\def\ket{{\rangle}}
\def\re{{\rm e}}        
\def\rd{{\rm d}}        
\def\bJ{\mathbf{J}}
\def\br{\mathbf{r}}
\def\x{\times}
\def\ox{\otimes}
\def\fF{\mathfrak{F}}
\def\fH{\mathfrak{H}}
\def\cC{\mathcal{C}}
\def\cI{\mathcal{I}}
\def\Ups{\Upsilon}
\def\diag{\mathop{\rm diag}}    
\title{\LARGE \bf
A Karhunen-Loeve Expansion for
One-mode Open Quantum Harmonic Oscillators Using the Eigenbasis of the Two-point Commutator Kernel$^*$}
\author{Igor G. Vladimirov$^\dagger$,\qquad
Matthew R. James$^\dagger$,\qquad
Ian R. Petersen$^\dagger$
\thanks{$^*$This work is supported by the Air Force Office of Scientific Research (AFOSR) under agreement number FA2386-16-1-4065 and the Australian Research Council under grant DP180101805.}
\thanks{$^\dagger$Research School of Electrical, Energy and Materials Engineering, College of Engineering and Computer Science,
Australian National University, Canberra, Acton, ACT 2601,
Australia, {\tt
igor.g.vladimirov@gmail.com, matthew.james@anu.edu.au, i.r.petersen@gmail.com}.
}
}
\begin{document}

\maketitle
\thispagestyle{empty}

\begin{abstract}
This paper considers one-mode open quantum harmonic oscillators with a pair of conjugate position and momentum variables driven by vacuum bosonic fields according to a linear quantum stochastic differential equation. Such systems model cavity resonators in quantum optical experiments. Assuming that the quadratic Hamiltonian of the oscillator is specified by a positive definite energy matrix, we consider a modified version of the quantum Karhunen-Loeve expansion of the system variables proposed recently. The expansion employs eigenvalues and eigenfunctions of the two-point commutator kernel for linearly transformed system variables. We take advantage of the specific structure of this eigenbasis in the one-mode case (including its connection with the classical Ornstein-Uhlenbeck process). These  results are applied to computing quadratic-exponential cost functionals which provide robust performance criteria for risk-sensitive control of open quantum systems.
\end{abstract}


\section{INTRODUCTION}

Similarly to the Fourier series for square integrable functions, classical random processes with finite second moments over bounded time intervals admit the Karhunen-Loeve (KL) expansion \cite{GS_2004} (see also \cite{IT_2010}). This representation employs an orthonormal basis of eigenfunctions of the covariance kernel of the process with uncorrelated random coefficients (which are independent in the Gaussian case). The truncation of the resulting series provides a meshless approximation (rather than time discretization) of the underlying random process and is similar in this regard to the Ritz-Galerkin methods \cite{M_1950}.

This approach has recently been extended in \cite{VPJ_2019b} to quantum processes which describe the Heisenberg evolution of dynamic variables of open quantum harmonic oscillators (OQHOs), which constitute a building  block of linear quantum systems theory \cite{NY_2017,P_2017}.  In the framework of the  Hudson-Parthasarathy
calculus \cite{HP_1984,P_1992,P_2015}, such systems are governed by linear quantum stochastic differential equations (QSDEs) driven by quantum Wiener processes on a symmetric Fock space \cite{PS_1972}, which represent bosonic quantum fields (quantised electromagnetic radiation).  The dynamics of OQHOs, affected by their interaction with the external fields, are specified by a quadratic system Hamiltonian and linear system-field coupling operators.

The quantum Karhunen-Loeve (QKL) expansion, proposed in \cite{VPJ_2019b}, represents the system variables of a stable OQHO (with a Hurwitz dynamics matrix) over a bounded time interval by a series of eigenfunctions of the invariant two-point quantum covariance kernel. In contrast to the classical case, the coefficients of the QKL expansion are
organised as conjugate pairs of noncommuting quantum mechanical positions and momenta \cite{S_1994},
whose statistical properties are described in quantum probabilistic terms \cite{H_2018,M_1995} and do not reduce to classical joint probability distributions.
However, the eigenfunctions of the quantum covariance kernel are not always available in closed form. At the same time, the imaginary part of this kernel, which describes the two-point canonical commutation relations (CCRs) of the system variables, lends itself to a complete eigenanalysis for a class of one-mode OQHOs with a position-momentum pair of system variables  and multichannel input fields.

For this class of one-mode OQHOs
with positive definite energy matrices and a stability condition on the system-field coupling, the present paper develops a modified version of the QKL expansion which uses the eigenbasis associated with the commutator kernel. Modulo a symplectic transformation of the system variables, the eigenvalues and eigenfunctions of the commutator kernel reduce to those for the covariance kernel of a classical Ornstein-Uhlenbeck process \cite{KS_1991}. This leads to sinusoidal eigenfunctions as closed-form solutions of a boundary value problem for a linear second-order ODE. The coefficients of the modified QKL expansion  for the transformed system variables are again position-momentum pairs with interpair commutativity. Moreover,  they are in a Gaussian quantum state if the OQHO is driven by vacuum fields, and their cross-covariances lend themselves to explicit computation.

We then apply the modified QKL expansion of the system variables to computing a quadratic-exponential functional (QEF) \cite{VPJ_2018a} (see also \cite{B_1996}), which also involves symplectic techniques   \cite{VPJ_2018c}. In comparison with the recent results on this approach in \cite{VPJ_2019b}, the present paper takes advantage of the specific features of the one-mode case (including the choice of a more tractable eigenbasis and its relations to the classical Ornstein-Uhlenbeck process). The relevance of this problem is explained by the fact that the QEF is an alternative (though closely related \cite{VPJ_2019a}) version of the original quantum risk-sensitive cost \cite{J_2004,J_2005}. Its minimization in quantum control and filtering problems (by an appropriate choice of a controller or filter for a given quantum plant) improves conservativeness of the closed-loop system in the sense of large deviations of quantum trajectories \cite{VPJ_2018a} and robustness with respect to uncertainties in the system-field state \cite{VPJ_2018b} described
in terms of quantum relative entropy \cite{OW_2010,YB_2009}. In addition to being an important robust performance analysis problem, the QEF computation is of interest in its own right and also has deep connections with operator algebras \cite{AB_2018}, 
the characteristic (or moment-generating) functions for quadratic Hamiltonians \cite{PS_2015} and the quantum L\'{e}vy area \cite{CH_2013,H_2018}.

The paper is organised as follows.
Section~\ref{sec:sys}  specifies the class of one-mode OQHOs under consideration.
Section~\ref{sec:eig} considers the spectrum and eigenfunctions for the two-point commutator kernel of the system variables of the OQHO.
Section~\ref{sec:QKL} employs this eigenbasis for a modified quantum Karhunen-Loeve
expansion of the system variables and discusses the statistical properties of the QKL coefficients in the case of vacuum input fields.
Section~\ref{sec:QEF} applies the modified QKL representation to computing the QEF for the one-mode OQHO.
Section~\ref{sec:conc} provides concluding remarks.

\section{ONE-MODE OPEN QUANTUM HARMONIC OSCILLATORS}
\label{sec:sys}

We consider a one-mode open quantum harmonic oscillator (OQHO) endowed with a pair of conjugate position $q$ and momentum $p$ variables,  which are assembled into a vector
\begin{equation}
\label{Xqp}
    X(t) =
    {\begin{bmatrix}
        q(t)\\
        p(t)
    \end{bmatrix}}
\end{equation}
(vectors are organised as columns unless indicated otherwise)
and evolve in time $t\>0$. These system variables are time-varying self-adjoint operators, satisfying the Weyl canonical commutation relations (CCRs) \cite{F_1989} in the Heisenberg infinitesimal  form
\begin{equation}
\label{XCCR}
    [X(t),X(t)^\rT]
    =
    {\begin{bmatrix}
     [q(t), q(t)] & [q(t), p(t)]\\
     [p(t), q(t)] & [p(t), p(t)]
    \end{bmatrix}}
    =
    i \bJ
\end{equation}
for any $t\>0$, with $i:= \sqrt{-1}$ the imaginary unit, and $[\alpha,\beta]:= \alpha\beta - \beta\alpha$ the commutator of linear operators. Here, use is made of the matrix
\begin{equation}
\label{bJ}
\bJ: = {\begin{bmatrix}
        0 & 1\\
        -1 & 0
    \end{bmatrix}}
\end{equation}
which spans the subspace of 
antisymmetric matrices of order 2, with $-i\bJ = \sigma_2$ being the second of the Pauli matrices \cite{S_1994}:
\begin{equation}
\label{Pauli}
    \sigma_1
    :=
    {\begin{bmatrix}
      0 & 1\\
      1 & 0
    \end{bmatrix}},
    \qquad
    \sigma_2
    :=
    {\begin{bmatrix}
      0 & -i\\
      i & 0
    \end{bmatrix}},
    \qquad
    \sigma_3
    :=
    {\begin{bmatrix}
      1 & 0\\
      0 & -1
    \end{bmatrix}}.
\end{equation}
The CCRs (\ref{XCCR}) hold, for example, for the operator of multiplication by the position variable $q$ and the differential operator $p:= -i\d_q$, acting on the Schwartz space
\cite{V_2002}. The latter is dense in the Hilbert space of square integrable functions, which can be used as an initial system space $\fH_0$ (for the action of the initial system variables $q(0)$, $p(0)$).  Associated with (\ref{XCCR}), (\ref{bJ}) is the CCR matrix of the system variables:
\begin{equation}
\label{Theta}
  \Theta
  :=
  \frac{1}{2} \bJ.
\end{equation}
The OQHO interacts with external bosonic fields which are modelled by a multichannel quantum Wiener process $W:=(W_k)_{1\< k \< m}$, consisting of an even number $m$ of time-varying self-adjoint operators $W_1(t), \ldots, W_m(t)$ on a symmetric Fock space $\fF$ \cite{P_1992}. In accordance with its continuous tensor-product structure \cite{PS_1972}, $\fF$ is endowed with an increasing family of subspaces $\fF_t$, so that $W_k(t)$ acts effectively on $\fF_t$ for any $t\>0$ and $k=1, \ldots, m$, with $(\fF_t)_{t\> 0}$ playing the role of a filtration for $\fF$.  The component  quantum Wiener processes, which initially are the identity operator $W_k(0)= \cI_{\fF}$ on the Fock space,  satisfy the two-point CCRs
$    [W(s), W(t)^\rT]
     := ([W_j(s), W_k(t)])_{1\< j,k\< m}
     =
    2i(s\wedge t)J
$
for all $s,t\>0$, with $s\wedge t:= \min(s,t)$ for the sake of brevity.
Here,
\begin{equation}
\label{JJ}
    J:=  \bJ \ox I_{m/2}
\end{equation}
is an orthogonal real  antisymmetric matrix (so that $J^2 = -I_m$), with $\ox$ the Kronecker product, and $I_r$ the identity matrix of order $r$.
Due to the interaction of the OQHO with the external bosonic fields,  the system variables $q(t)$, $p(t)$ in (\ref{Xqp}) act on the tensor-product Hilbert space $\fH_t:= \fH_0 \ox \fF_t$ (a subspace of the system-field space $\fH:= \fH_0 \ox \fF$). As a particular case of the Hudson-Parthasarathy calculus \cite{HP_1984,P_1992}, the evolution of the system variables is modelled by a linear QSDE
\begin{equation}
\label{dX}
    \rd X = AX \rd t + B\rd W
\end{equation}
driven by the quantum Wiener process $W$ (the time arguments are omitted for brevity), whose equivalent integral form is
\begin{equation}
\label{Xsol}
    X(t)
    =
    \re^{(t-s)A} X(s) + \int_s^t \re^{(t-\tau)A} B \rd W(\tau),
    \qquad
    t\> s\> 0.
\end{equation}
In view of (\ref{XCCR})--(\ref{Theta}), the matrices $A \in \mR^{2\x 2}$, $B\in \mR^{2\x m}$ are parameterised as
\begin{equation}
\label{AB}
    A = \bJ (R + M^\rT JM),
     \qquad
     B = \bJ M^\rT
\end{equation}
in terms of the energy matrix $R = R^\rT:=(r_{jk})_{1\< j,k\< 2} \in \mR^{2\x 2}$ and the coupling matrix $M \in \mR^{m\x 2}$ which specify the system Hamiltonian
\begin{equation}
\label{H}
    H :=
    \frac{1}{2} X^\rT R X
    =
    \frac{1}{2}
    (r_{11}q^2 + r_{12}(qp+pq) + r_{22}p^2)
\end{equation}
and the vector $MX$ of $m$ system-field coupling operators. The representation (\ref{AB}) is closely related to the physical realizability (PR) condition \cite{JNP_2008}
\begin{equation}
\label{PR}
    A \Theta + \Theta A^\rT + BJB^\rT = 0
\end{equation}
for the preservation of the CCRs (\ref{XCCR}) in time, with the CCR matrix $\Theta $ given by (\ref{Theta}). Similarly to classical linear SDEs,  the asymptotic behaviour of the system variables governed by the QSDE (\ref{dX}) (such as the existence of and convergence to an invariant quantum state)  depends  on whether the matrix $A$ in (\ref{AB}) is Hurwitz. To this end, note that $M^\rT J M$ is a real antisymmetric matrix of order $2$ in view of (\ref{JJ}), and hence, it is representable as
\begin{equation}
\label{MJM}
  M^\rT J M = \mu \bJ
\end{equation}
in terms of the basis matrix $\bJ$ from (\ref{bJ}) for some scalar $\mu \in \mR$.

\begin{thm}
 \label{th:stab}
 Suppose the energy matrix  $R$ of the one-mode OQHO (\ref{dX}) is positive definite.
 Then the
 eigenvalues of the matrix $A$ in (\ref{AB}) are given by $-\mu\pm i\nu$, where $\mu$ is specified by (\ref{MJM}), and
  \begin{equation}
 \label{freq}
        \nu := \sqrt{\det R}.
\end{equation}
Moreover, in this case,
\begin{equation}
\label{AR}
    A = R^{-1/2}
    {\begin{bmatrix}
      -\mu & \nu\\
      -\nu & -\mu
    \end{bmatrix}}
    \sqrt{R},
\end{equation}
where $R^{-1/2}$ is the inverse of the real positive definite symmetric matrix square root $\sqrt{R}$. In particular, if $\mu>0$ (in addition to $R\succ 0$), then the matrix $A$ is Hurwitz.\hfill$\square$
\end{thm}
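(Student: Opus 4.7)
The plan is to reduce $A$ to a sum of a scalar and a simpler matrix, read off the spectrum from $2\times 2$ data, and obtain the factorization from a small commutation identity peculiar to dimension two.

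First, I would substitute (\ref{MJM}) into (\ref{AB}) and use $\bJ^2 = -I_2$ to obtain
\begin{equation*}
  A \;=\; \bJ(R + \mu \bJ) \;=\; \bJ R - \mu I_2.
\end{equation*}
Hence the eigenvalues of $A$ are obtained by shifting those of $\bJ R$ by $-\mu$. Because $\bJ$ is antisymmetric and $R$ is symmetric, $\tr(\bJ R) = \tr((\bJ R)^\rT) = -\tr(\bJ R)$, so $\tr(\bJ R)=0$, and $\det(\bJ R) = \det\bJ \cdot \det R = \det R$. The characteristic polynomial of $\bJ R$ is therefore $\lambda^2 + \det R$, whose roots are $\pm i\nu$ with $\nu := \sqrt{\det R}$ as in (\ref{freq}) (well defined since $R\succ 0$). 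Shifting, the eigenvalues of $A$ are $-\mu \pm i\nu$.

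For the factorization (\ref{AR}), the key observation is that any symmetric $2\times 2$ matrix $S$ satisfies
\begin{equation*}
  S \bJ S \;=\; (\det S)\, \bJ,
\end{equation*}
which I would verify either by a one-line direct computation on a general symmetric $S$, or by noting that $S \bJ S$ is antisymmetric (so a scalar multiple of $\bJ$) and comparing determinants. Applying this to $S:= \sqrt{R}\succ 0$ (with $\det S = \nu$) yields $\sqrt{R}\, \bJ \sqrt{R} = \nu \bJ$, and hence
\begin{equation*}
  \bJ R \;=\; \bJ \sqrt{R}\sqrt{R} \;=\; R^{-1/2}\bigl(\sqrt{R}\,\bJ\sqrt{R}\bigr)\sqrt{R} \;=\; R^{-1/2}(\nu \bJ)\sqrt{R}.
\end{equation*}
Combining with $A = -\mu I_2 + \bJ R$ and $-\mu I_2 = R^{-1/2}(-\mu I_2)\sqrt{R}$ gives
\begin{equation*}
  A \;=\; R^{-1/2}\bigl(-\mu I_2 + \nu \bJ\bigr)\sqrt{R} \;=\; R^{-1/2}\!
  {\begin{bmatrix} -\mu & \nu\\ -\nu & -\mu\end{bmatrix}}\!\sqrt{R},
\end{equation*}
which is exactly (\ref{AR}). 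The Hurwitz assertion is then immediate: the eigenvalues $-\mu\pm i\nu$ have real part $-\mu<0$ whenever $\mu>0$.

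There is no real obstacle here; the only mildly nontrivial ingredient is the identity $S\bJ S=(\det S)\bJ$ for symmetric $2\times 2$ matrices, which relies crucially on dimension two. Everything else is a trace/determinant computation and a similarity rewriting.
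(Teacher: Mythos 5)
Your proof is correct and follows essentially the same route as the paper: both reduce $A$ to $\bJ R-\mu I_2$ via $\bJ^2=-I_2$ and obtain (\ref{AR}) from the identity $S\bJ S^\rT=(\det S)\bJ$ applied to $S=\sqrt{R}$. The only cosmetic difference is that you read the spectrum off $\bJ R$ directly from its zero trace and determinant $\det R$, whereas the paper deduces it from the similarity in (\ref{AR}); both arguments are equally valid.
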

\begin{proof}
Since the matrix $\bJ$ in (\ref{bJ}) satisfies $\bJ^2 = -I_2$, then it follows from (\ref{MJM}) and the assumption  $R\succ 0$ that the matrix $A$ in (\ref{AB}) takes the form
\begin{align}
\nonumber
    A
    & =
    \bJ (R + \mu \bJ)
    = \bJ R - \mu I_2\\
\nonumber
    &=
    R^{-1/2}(\sqrt{R}\bJ\sqrt{R} - \mu I_2)\sqrt{R}\\
\label{AR1}
    & =
    R^{-1/2}(\nu \bJ  - \mu I_2)\sqrt{R}.
\end{align}
Here, use is also made of the quantity $\nu$ from (\ref{freq}) along with the identity $S\bJ S^\rT = \bJ \det S$ for any matrix $S \in \mC^{2\x 2}$ (whereby a real $(2\x 2)$-matrix is symplectic \cite{D_2006} if and only if it has unit determinant). The right-hand side of (\ref{AR1}) is identical to that of (\ref{AR}). Therefore, the matrix $A$ is related by a similarity transformation (and hence, is isospectral) to the matrix ${\small\begin{bmatrix}
      -\mu & \nu\\
      -\nu & -\mu
    \end{bmatrix}}$ whose eigenvalues are $-\mu \pm i\nu$, so that the fulfillment of $\mu >0$ (together with $R\succ 0$) indeed makes $A$ Hurwitz.
\end{proof}

Theorem~\ref{th:stab} implies (under its conditions) that the linear transformation
\begin{equation}
\label{Xnew}
    \wt{X}:=  S X,
    \qquad
    S:= \sqrt{\frac{1}{\nu} R}
\end{equation}
yields a vector $\wt{X}$ of new self-adjoint system variables with the same CCR matrix (\ref{Theta}):
$    [\wt{X},\wt{X}^\rT]
    =
    S [X,X^\rT] S^\rT = iS\bJ S^\rT = i\bJ
$.
Indeed, in view of (\ref{freq}), the matrix $S \in \mR^{2\x 2}$ in (\ref{Xnew}) satisfies $\det S = \frac{1}{\nu}\sqrt{\det R}=1$ and is therefore symplectic.
With respect   to the new system variables, the OQHO has a scalar energy matrix $\wt{R}:= \nu I_2$, which is obtained by representing the Hamiltonian $H$ in (\ref{H}) as
$
    H = \frac{1}{2} (\sqrt{R} X)^\rT \sqrt{R} X = \frac{1}{2} \nu \wt{X}^\rT \wt{X}
$.
Similarly, the corresponding new coupling matrix is $\wt{M} := MS^{-1}$ since $MX = \wt{M} \wt{X}$. It has the same parameter $\mu$ in (\ref{MJM}) due to $S$ being symplectic:
$
    \wt{M}^\rT J \wt{M}
     = S^{-\rT}M^\rT J MS^{-1}
     = \mu S^{-\rT}\bJ S^{-1} =-\mu (S\bJ S^\rT)^{-1} = \mu\bJ
$,
where $S^{-\rT}:= (S^{-1})^\rT$, and use is made of the property $\bJ^{-1} = -\bJ$.
Furthermore, the new system variables satisfy the QSDE
$    \rd \wt{X} = \wt{A}\wt{X} \rd t + \wt{B}\rd W
$,
where, in view of (\ref{AR}), (\ref{Xnew}), the matrices $\wt{A} \in \mR^{2\x 2}$, $\wt{B} \in \mR^{2\x m}$ are given by
\begin{equation}
\label{ABnew}
    \wt{A} = SAS^{-1} = \sqrt{R} AR^{-1/2}
    =
    {\begin{bmatrix}
      -\mu & \nu\\
      -\nu & -\mu
    \end{bmatrix}},
     \qquad
     \wt{B} = SB = \sqrt{\frac{1}{\nu} R} B.
\end{equation}
Therefore, for the one-mode OQHO with the CCR matrix $\Theta$ in (\ref{Theta}) and  a positive definite energy matrix $R$ in (\ref{H}), it can be considered  that 
\begin{equation}
\label{Rscal}
    R = \nu I_2,
    \qquad
    \nu >0,
\end{equation}
without loss of generality.
In this case, the matrix $A$ reduces to
\begin{equation}
\label{Amunu}
    A
    =
    \nu \bJ  - \mu I_2
    =
    {\begin{bmatrix}
      -\mu & \nu\\
      -\nu & -\mu
    \end{bmatrix}}
\end{equation}
(in accordance with (\ref{AR}), (\ref{AR1}), (\ref{ABnew})) and its Hurwitz property is equivalent to that the coupling matrix $M$ satisfies $\mu>0$ in (\ref{MJM}). Similarly to the classical case, the quantity
\begin{equation}
\label{vartheta}
  \vartheta:= \frac{1}{\mu}
\end{equation}
describes a typical time of transient processes in the OQHO.

 \section{EIGENBASIS OF THE TWO-POINT COMMUTATOR KERNEL}
\label{sec:eig}

Regardless of a particular system-field quantum state, the system variables of the OQHO satisfy  the two-point CCRs \cite{VPJ_2018a}
\begin{equation}
\label{XXcomm}
    [X(s), X(t)^\rT]
    =
    2i\Lambda(s-t),
    \qquad
    s,t\>0,
\end{equation}
with
\begin{equation}
\label{Lambda}
    \Lambda(\tau)
     :=
     \frac{1}{2}
    \left\{
    {\begin{matrix}
    \re^{\tau A}\bJ& {\rm if}\  \tau\> 0\\
    \bJ\re^{-\tau A^{\rT}} & {\rm if}\  \tau< 0
    \end{matrix}}
    \right.
    =
    -\Lambda(-\tau)^\rT,
\end{equation}
which follows from (\ref{Xsol}) and the commutativity between the forward increments of the quantum Wiener   process and the past system variables: $[\rd W(\tau), X(s)^\rT] = 0$  for all $\tau\> s\>0$. In particular, the one-point CCRs (\ref{XCCR})--(\ref{Theta}) are recovered from (\ref{XXcomm}) as $\Lambda(0) = \Theta$.

In view of the discussion in the previous section,
it is assumed that the one-mode OQHO under consideration has a scalar  energy matrix (\ref{Rscal}) and a coupling matrix $M\in \mR^{m\x 2}$ with $\mu>0$ in (\ref{MJM}), so that the matrix $A$ in (\ref{Amunu}) is Hurwitz.  An appropriate subset of $\mR^{2\x 2}$ is isomorphic to the complex plane $\mC$ according to the correspondence
$    {\small\begin{bmatrix}
      x & -y\\
      y & x
    \end{bmatrix}}
    =
    xI_2 - y\bJ
    \leftrightarrow
    x+iy
$ for all     $    x,y\in \mR$,
whereby such matrices commute with each other.
 In particular, this commutativity holds for the matrices $
    \bJ \leftrightarrow -i$ and
$
    A \leftrightarrow -\mu -i\nu
$
 in (\ref{bJ}), (\ref{Amunu}),
so that the two-point CCR function in (\ref{Lambda}) can be represented as
\begin{equation}
\label{Lambda1}
  \Lambda(\tau)
  =
  \frac{1}{2}
  C(\tau)
  U(\tau) \bJ
  \leftrightarrow
  -\frac{i}{2} \re^{-\mu|\tau|-i\nu\tau},
\end{equation}
where
\begin{equation}
\label{C}
  C(\tau):= \re^{-\mu|\tau|},
  \qquad
  \tau \in \mR.
\end{equation}
Here, we have used the following orthogonal matrices of order $2$:
\begin{equation}
\label{U}
  U(\tau)
  :=
  \re^{\tau \nu\bJ}
  =
  {\begin{bmatrix}
    \cos(\nu\tau ) & \sin(\nu\tau)\\
    -\sin(\nu\tau ) & \cos(\nu\tau)
  \end{bmatrix}}
  =
  U(-\tau)^\rT
  \leftrightarrow
  \re^{-i\nu\tau},
\end{equation}
which (as a function of $\tau \in \mR$) form a one-parameter  group of planar rotations (with the infinitesimal generator $U'(0)=\nu \bJ$) and commute with $\bJ$ and between themselves. This gives rise to a time-varying symplectic transformation
\begin{equation}
\label{Xhat}
  \wh{X}(t)
  :=
  U(t)^\rT X(t)
  =
  U(-t)X(t),
\end{equation}
so that the transformed  system variables retain the one-point CCRs (\ref{XCCR}), while their two-point  CCRs are essentially ``scalarised'':
\begin{align}
\nonumber
    [\wh{X}(s),\wh{X}(t)^\rT]
    & =
    U(s)^\rT [X(s),X(t)^\rT] U(t)\\
\nonumber
     & = 2iU(s)^\rT \Lambda(s-t) U(t)\\
\nonumber
    & =
    i C(s-t)\bJ U(-s)U(s-t)U(t)\\
\label{XCCR2}
    & = i C(s-t)\bJ,
    \qquad
    s,t\>0,
\end{align}
where, due to (\ref{Lambda1}), the dependence on the time arguments is present only in the scalar factor $C(s-t)$ given by (\ref{C}). Now note that, in view of $\mu>0$,  the function $C$ is the invariant covariance function for an Ornstein-Uhlenbeck process \cite{KS_1991} $\zeta$ governed by a classical linear SDE $\rd \zeta = -\mu \zeta\rd t + \sqrt{2\mu}\rd \omega$, which is driven by a standard Wiener process $\omega$. For a fixed but otherwise arbitrary finite time-horizon $T>0$, the covariance kernel admits Mercer's representation
\begin{equation}
\label{Cff}
    C(s-t)
    =
    \sum_{k=1}^{+\infty}
    \lambda_k
    f_k(s)f_k(t),
    \qquad
    0\< s,t\< T,
\end{equation}
in terms of the eigenvalues $\lambda_k$
and orthonormal eigenfunctions $f_k: [0,T]\to \mR$ of a positive definite self-adjoint  operator $\cC$  which maps a square integrable function $f$ on $[0,T]$ to another such function $g:= \cC f$ as
\begin{equation}
\label{cC}
    g(s)
    :=
    \int_0^T
    C(s-t)
    f(t)
    \rd t,
    \qquad
    0\< s\< T
\end{equation}
(the Hilbert space $L^2([0,T],\mR)$ is endowed with the standard inner product
$
    \bra \varphi, \psi\ket := \int_0^T \varphi(t) \psi(t)\rd t
$ and the norm $\|\varphi\|:= \sqrt{\bra \varphi, \varphi\ket}$).
In view of (\ref{C}), (\ref{Cff}),
the eigenvalues of the operator $\cC$ in (\ref{cC})  satisfy $\sum_{k=1}^{+\infty} \lambda_k = \Tr \cC = \int_0^T C(0)\rd t = T$ and the bounds
\begin{equation}
\label{lam}
    0
    <
    \lambda_k
    \<
    \sup_{0\< s\< T}
    \int_0^T |C(s-t)|\rd t
    <
    2\int_0^{+\infty} \re^{-\mu \tau} \rd \tau
    =
    \frac{2}{\mu}.
\end{equation}
By using the
distributional \cite{V_2002} derivatives
$C'(\tau) = -\mu \sign(\tau) C(\tau)$ and $C''(\tau) = -2\mu \delta(\tau) +\mu^2C(\tau)$, with $\sign(\cdot)$ and $\delta(\cdot)$ the sign and Dirac's delta functions (or by splitting (\ref{cC}) into the sum of two integrals over the intervals $[0,s]$ and $[s,T]$), it follows that $\cC: f\mapsto g$ describes the solution of the boundary value problem
\begin{align}
\label{g''}
    g''(s) - \mu^2 g(s) & = -2\mu f(s),
    \qquad
    0\< s\< T,\\
\label{g'0}
    g'(0)
    & = \mu \int_0^T \re^{-\mu t} f(t)\rd t = \mu g(0),\\
\label{g'T}
    g'(T)
    & =
    -\mu \int_0^T \re^{\mu (t-T)} f(t)\rd t = -\mu g(T),
\end{align}
so that $C(s-t)$ in (\ref{cC}) is the corresponding Green's function.
Substitution of $g:= \lambda f$ into (\ref{cC})--(\ref{g'T}), with $0< \lambda < \frac{2}{\mu}$ in view of (\ref{lam}), represents the eigenvalue problem as the  boundary value problem
\begin{align}
\label{f''}
    f''(t) + \Big(\frac{2\mu}{\lambda}-\mu^2\Big)f(t) & = 0,
    \qquad
    0\< t\< T,\\
\label{f'0f''T}
    f'(0) & = \mu f(0),
    \qquad
    f'(T) = -\mu f(T).
\end{align}
The linear second-order ODE (\ref{f''}) leads to the eigenvalues
\begin{equation}
\label{lamom}
    \lambda_k
    =
    \frac{2\mu}{\mu^2 + \omega_k^2},
    \qquad
    k = 1,2,3,\ldots,
\end{equation}
where, in view of (\ref{f'0f''T}),  the frequencies $\omega_k>0$ form an increasing sequence of solutions of the equation
\begin{equation}
\label{trans}
    2\mu \omega_k \cos(\omega_k T) + (\mu^2-\omega_k^2) \sin(\omega_k T) = 0
\end{equation}
and specify the corresponding orthonormal eigenfunctions
\begin{equation}
\label{fk}
  f_k(t) = \frac{1}{\gamma_k}(\omega_k \cos(\omega_k t) + \mu \sin(\omega_k t)),
  \qquad
  0\< t\< T.
\end{equation}
The normalization constants $\gamma_k$ are found from the condition $\|f_k\| = 1$:
\begin{align}
\nonumber
    \gamma_k^2
     &=
    \int_0^T
    (\omega_k^2 \cos(\omega_k t)^2 + \mu^2 \sin(\omega_k t)^2 + \mu\omega_k \sin(2\omega_k t))\rd t\\
\nonumber
    & =
    \frac{T}{2}(\omega_k^2 + \mu^2)
    +
    \frac{\omega_k^2-\mu^2}{4\omega_k} \sin(2\omega_k T)
    +
    \frac{\mu}{2}(1-\cos(2\omega_k T))
    \\
\nonumber
    & =
    \frac{T}{2}(\omega_k^2 + \mu^2)
    +
    \frac{\omega_k^2-\mu^2}{2\omega_k} \sin(\omega_k T) \cos(\omega_k T)
    +
    \mu \sin(\omega_k T)^2
    \\
\label{norm}
    & =
    \frac{T}{2}(\omega_k^2 + \mu^2)
    +
    \mu
    ,
\end{align}
where
(\ref{trans}) is used. In view of (\ref{trans}),
the dimensionless quantities
\begin{equation}
\label{uk}
    u_k:= \frac{\omega_k}{\mu} = \omega_k \vartheta
\end{equation}
depend on the ratio of the time horizon $T$ and the transient time $\vartheta$ of the OQHO in (\ref{vartheta}):
\begin{equation}
\label{r}
    r:= \mu T = \frac{T}{\vartheta},
\end{equation}
so that
\begin{equation}
\label{trans1}
    2u_k \cos( r u_k) + (1-u_k^2) \sin(ru_k) = 0.
\end{equation}
Since $\frac{2u}{1-u^2} = \tan(2\arctan(u))$, then
(\ref{trans1}) is equivalent to $\sin(ru_k + 2\arctan (u_k)  ) = 0$, and the sequence of positive roots $u_1< u_2<u_3 <\ldots$ is found uniquely from
\begin{equation}
\label{pik}
    ru_k + 2\arctan (u_k) = \pi k,
    \qquad
    k = 1,2,3,\ldots,
\end{equation}
whose left-hand side is a strictly increasing function of $u_k$. The relations (\ref{lamom})--(\ref{pik}) describe the subsidiary eigenbasis associated with the two-point CCRs.

 \section{A MODIFIED QUANTUM KARHUNEN-LOEVE EXPANSION OF THE SYSTEM VARIABLES}
\label{sec:QKL}

With the transformed system variables (\ref{Xhat}),  we associate a sequence of vectors
\begin{equation}
\label{zeta}
  \zeta_k
  :=
  {\begin{bmatrix}
    \xi_k\\
    \eta_k
  \end{bmatrix}}
  :=
  \frac{1}{\sqrt{\lambda_k}}
  \int_0^T
  f_k(t)\wh{X}(t)
  \rd t,
  \qquad
  k = 1,2,3,\ldots,
\end{equation}
which consist of self-adjoint quantum variables $\xi_k$, $\eta_k$ on the system-field space  $\fH$. By applying (\ref{XCCR2}), it follows that they satisfy the CCRs
\begin{align}
\nonumber
    [\zeta_j, \zeta_k^\rT]
    & =
    \frac{1}{\sqrt{\lambda_j\lambda_k}}
    \int_{[0,T]^2}
    f_j(s)
    f_k(t)
    [\wh{X}(s), \wh{X}(t)^{\rT}]
    \rd s
    \rd t\\
\nonumber
    & =
    \frac{i}{\sqrt{\lambda_j\lambda_k}}
    \int_{[0,T]^2}
    f_j(s)
    f_k(t)
    C(s-t)
    \rd s
    \rd t
    \bJ\\
\label{zetacomm}
    & =
    \frac{i}{\sqrt{\lambda_j\lambda_k}}
    \bra
        f_j,
        \cC f_k
    \ket
    \bJ
    =
    i
    \sqrt{\frac{\lambda_k}{\lambda_j}}
    \bra
        f_j,
        f_k
    \ket
    \bJ
    =
    i
    \delta_{jk}
    \bJ,
\end{align}
where the orthonormality $\bra f_j, f_k\ket = \delta_{jk}$ of the eigenfunctions of the operator $\cC$ in (\ref{cC}) is also used, with $\delta_{jk}$ the Kronecker delta.  Therefore, $\xi_k$, $\eta_k$ form conjugate pairs of quantum mechanical positions and momenta, which commute for different $k$.  A combination of (\ref{Xhat}) with (\ref{zeta}) represents the system variables of the OQHO as
\begin{equation}
\label{XQKL}
    X(t)
    =
    U(t)
    \sum_{k=1}^{+\infty}
    \sqrt{\lambda_k}
    f_k(t)
    \zeta_k,
    \qquad
    0\< t\< T,
\end{equation}
which is a modified version of the quantum Karhunen-Loeve (QKL) expansion  \cite{VPJ_2019b}. Here, in view of (\ref{U}), (\ref{fk}), the factors $f_k(t)U(t)$ are sinusoidal functions of time.

The representation (\ref{XQKL}) is based on the two-point CCRs  of the system variables regardless of the quantum state. Now, if the OQHO is driven by vacuum input fields \cite{P_1992}, then the system variables (\ref{Xqp}) have a unique invariant multipoint Gaussian quantum state \cite{VPJ_2018a} with zero mean and the two-point quantum covariance matrix
\begin{equation}
\label{EXX}
  \bE(X(s)X(t)^\rT)
  =
  \Sigma(s-t) + i\Lambda(s-t),
  \qquad
  s,t\> 0,
\end{equation}
where $\bE \xi := \Tr(\rho \xi)$ is the quantum expectation over an underlying density operator $\rho$. Here, the imaginary part of the quantum covariances is described by (\ref{XXcomm}), (\ref{Lambda}), while the real part is given by
\begin{equation}
\label{K}
    \Sigma(\tau)
    =
    \left\{
    {\begin{matrix}
    \re^{\tau A}P& {\rm if}\  \tau\> 0\\
    P\re^{-\tau A^{\rT}} & {\rm if}\  \tau< 0
    \end{matrix}}
    \right.=
    C(\tau)
    \left\{
    {\begin{matrix}
    U(\tau)P & {\rm if}\  \tau\> 0\\
    PU(\tau) & {\rm if}\  \tau< 0
    \end{matrix}}
    \right.
    =
    \Sigma(-\tau)^\rT,
\end{equation}
where (\ref{C}), (\ref{U}) are used, and $
  P + i\Theta = P + \frac{i}{2}\bJ
$
is the invariant one-point quantum covariance matrix of the system variables in view of (\ref{Theta}). While the CCR matrix $\Theta$ is related to $A$, $B$ by the PR condition (\ref{PR}), the matrix $P = P^\rT \in \mR^{2\x 2}$ coincides with the controllability Gramian of the pair $(A,B)$ and is a unique solution of the algebraic Lyapunov equation
$
    AP + P A^\rT + BB^\rT = 0
$ which, in view of (\ref{Amunu}) and the antisymmetry of $\bJ$ in (\ref{bJ}), takes the form
\begin{equation}
\label{ALE}
    -2\mu P + \nu [\bJ, P] + BB^\rT = 0.
\end{equation}
This equation can be solved by using the Pauli matrices (\ref{Pauli}) and their Lie-algebraic properties  \cite{D_2006} together with the fact that real symmetric matrices of order $2$ form a three-dimensional subspace of $\mR^{2\x 2}$ spanned by $I_2$, $\sigma_1$, $\sigma_3$. More precisely, if $b_0, b_1, b_3 \in \mR$ are the coefficients  of $BB^\rT = b_0 I_2 + b_1 \sigma_1 + b_3\sigma_3$ over this basis, then, due to the identities $[\bJ, \sigma_1] = 2\sigma_3$ and $[\bJ, \sigma_3] = -2\sigma_1$,   the solution of (\ref{ALE}) is given by
\begin{equation}
\label{Psigma}
    P
    =
    \frac{1}{2}
    \Big(
        \frac{b_0}{\mu} I_2
        +
        \frac{1}{\mu^2 + \nu^2}
        (
            (\mu b_1  -\nu b_3)\sigma_1
            +
            (\nu b_1  +\mu b_3)\sigma_3
        )
    \Big).
\end{equation}
Since deterministic linear transformations of quantum variables in joint Gaussian states lead to Gaussian quantum variables \cite{KRP_2010}, then $\xi_1, \xi_2, \xi_3,\ldots$ and  $\eta_1, \eta_2, \eta_3,\ldots$ in (\ref{zeta}) are in  a zero-mean Gaussian quantum state with the covariances
\begin{equation}
\label{covzeta}
  \bE(\zeta_j \zeta_k^\rT)
  =
  P_{jk} + \frac{i}{2}\delta_{jk} \bJ,
  \qquad
  j,k=1,2,3,\ldots.
\end{equation}
Their real parts  $P_{jk} = P_{kj}^\rT \in \mR^{2\x 2}$ are computed by using (\ref{Xhat}), (\ref{EXX}), (\ref{K}) as
\begin{align}
\nonumber
    P_{jk}
   =&
  \frac{1}{\sqrt{\lambda_j\lambda_k}}
  \int_{[0,T]^2}
  f_j(s)f_k(t)
  \Re \bE (\wh{X}(s)\wh{X}(t)^\rT)
  \rd s\rd t\\
\nonumber
  = &
  \frac{1}{\sqrt{\lambda_j\lambda_k}}
  \int_{[0,T]^2}
  f_j(s)f_k(t)
  U(s)^\rT
  \Sigma(s-t)
  U(t)
  \rd s\rd t\\
\nonumber
  = &
  \frac{1}{\sqrt{\lambda_j\lambda_k}}
  \int_{[0,T]^2}
  f_j(s)f_k(t)
  C(s-t)
  U(s)^\rT
  (
  \chi_{s\>t}
  U(s-t)P
  +
  \chi_{s< t}
  PU(s-t)
  )
  U(t)
  \rd s\rd t\\
\nonumber
  = &
  \frac{1}{\sqrt{\lambda_j\lambda_k}}
  \int_{[0,T]^2}
  f_j(s)f_k(t)
  C(s-t)
  (
  \chi_{s\>t}
  U(t)^\rT P U(t)
  +
  \chi_{s< t}
  U(s)^\rT
  PU(s)
  )
  \rd s\rd t\\
\label{Ezz}
  = &
  \frac{1}{\sqrt{\lambda_j\lambda_k}}
  \int_{[0,T]^2}
  f_j(s)f_k(t)
  C(s-t)
  U(s\wedge t)^\rT P U(s\wedge t)
  \rd s\rd t,
\end{align}
where $\chi_\bullet$ is the indicator function of a set, and use is also made of
(\ref{C}) along with the group property of the matrix-valued function $U$ in (\ref{U}). The integration of the sinusoidal functions (with an exponential weight) on the right-hand side of (\ref{Ezz}) can be carried out in closed form but is cumbersome and is therefore omitted. The relations (\ref{Psigma})--(\ref{Ezz}) specify the statistical properties of the zero-mean Gaussian quantum variables in (\ref{zeta}) which play the role of coefficients in the modified QKL expansion (\ref{XQKL}).

\section{COMPUTING THE QUADRATIC-EXPONENTIAL FUNCTIONALS}
\label{sec:QEF}

For the one-mode OQHO, described by (\ref{XCCR}), (\ref{dX}), (\ref{AB}), (\ref{Rscal}) and (\ref{MJM}) with $\mu>0$, and assuming the time horizon $T$ to be fixed as before,
consider a quadratic exponential functional (QEF) $\Xi$ given by \cite{VPJ_2018a}:
\begin{equation}
\label{XiQ}
    \Xi
    :=
    \bE \re^{\theta Q},
    \qquad
    Q
    :=
    \int_0^T
    X(t)^{\rT} \Pi X(t)
    \rd t.
\end{equation}
Here, $Q$ is a positive semi-definite self-adjoint quantum variable,   which depends quadratically  on the past history of the system variables (over the time interval $[0,T]$) as specified by
a positive semi-definite matrix $\Pi = \Pi^\rT \in \mR^{2\x 2}$.
Also,
 $\theta>0$ is a risk-sensitivity parameter which is assumed to be sufficiently small to ensure that $\Xi<+\infty$. In view of the asymptotic behaviour
 $
    \lim_{\theta\to 0+}\frac{\ln \Xi}{\theta} = \bE Q
 $,
 the QEF $\Xi$ extends the mean square cost $\bE Q$, which is used, for example, in coherent quantum LQG control problems \cite{MP_2009,NJP_2009,SVP_2017,VP_2013a}. For finite values of $\theta>0$, the cost functional $\Xi$ imposes an exponential penalty on the past history of the system variables  captured by $Q$ in a quadratic fashion. This structure is different from (yet closely related to) the time-ordered exponentials in the original quantum risk-sensitive performance criteria for control problems  \cite{J_2004,J_2005} (see also \cite{DDJW_2006} and their subsequent development for quantum filtering problems \cite{YB_2009}).
The modified QKL expansion (\ref{XQKL})  allows the QEF in (\ref{XiQ}) to be represented as
\begin{equation}
\label{QX}
  Q
=
    \sum_{j,k=1}^{+\infty}
    \sqrt{\lambda_j\lambda_k}
    \zeta_j^\rT
    G_{jk}
    \zeta_k,
    \quad
    G_{jk}
    :=
      \int_0^T
      f_j(t)f_k(t)
      U(t)^\rT
      \Pi
      U(t)
    \rd t.
\end{equation}
The matrices $G_{jk}= G_{kj}^\rT \in \mR^{2\x 2}$ are simplified significantly
in the case of a scalar weighting matrix, when $\Pi = I_2$ without loss of generality (since a scalar factor can be absorbed by $\theta$ in (\ref{XiQ})). In this case, 
\begin{equation}
\label{G1}
    G_{jk}
    :=
      \int_0^T
      f_j(t)f_k(t)
      U(t)^\rT
      U(t)
    \rd t
    =
    \bra f_j, f_k\ket I_2
    =
    \delta_{jk}I_2
\end{equation}
due to the orthogonality of the matrix $U(t)$ in (\ref{U}) and the orthonormality of the eigenfunctions $f_k$ in  (\ref{fk}). Substitution of (\ref{G1}) into (\ref{QX}) leads to
\begin{equation}
\label{Qsplit}
    Q
    =
    \sum_{k=1}^{+\infty}
    \lambda_k
    \zeta_k^\rT
    \zeta_k
    =
    \sum_{k=1}^{+\infty}
    \lambda_k
    (\xi_k^2
    +
    \eta_k^2).
\end{equation}
The truncation of this series allows the QEF $\Xi$ in (\ref{XiQ}) to be represented as
\begin{equation}
\label{XiQN}
    \Xi
    =
    \lim_{N\to +\infty}
    \Xi_N,
    \qquad
    \Xi_N
    :=
    \bE
    \re^{\theta Q_N},
    \qquad
    Q_N
    :=
    \sum_{k=1}^{N}
    \lambda_k
    \zeta_k^\rT
    \zeta_k.
\end{equation}
Here, due to the interpair commutativity $[\zeta_j,\zeta_k^\rT] = 0$ for all $j\ne k$ in (\ref{zetacomm}),  and also \cite[Eqs. (49)--(51)]{VPJ_2018c},
the exponential admits the factorisations
\begin{equation}
\label{eQN}
    \re^{\theta Q_N}
    =
    \prod_{k=1}^N
    \re^{\theta \lambda_k (\xi_k^2+\eta_k^2)}
    =
    \prod_{k=1}^N
    \big(
    \re^{\frac{1}{2}\alpha_k \xi_k^2}
    \re^{\frac{1}{2}\beta_k \eta_k^2}
    \re^{\frac{1}{2}\alpha_k \xi_k^2}
    \big),
\end{equation}
with the positive scalars
\begin{equation}
\label{sympfactk}
    \alpha_k
    =
    \tanh (\theta\lambda_k),
    \qquad
   \beta_k
     =
    \sinh(2\theta\lambda_k)
\end{equation}
associated with the risk-sensitivity parameter $\theta$ and the eigenvalues (\ref{lamom}).
The factorizations (\ref{eQN}) can, in principle, be extended to $\re^{\theta Q}$ by using the series (\ref{Qsplit}). We will compute the ``truncated'' QEF $\Xi_N$ in (\ref{XiQN}) by applying the results of \cite{VPJ_2018c,VPJ_2019b} to the quadratic form $Q_N = Z_N^\rT(\diag_{1\< k\< N} (\lambda_k)\ox I_2)Z_N$, where the vector
$    Z_N:=
    {\scriptsize\begin{bmatrix}
      \zeta_1\\
      \vdots\\
      \zeta_N
    \end{bmatrix}}
     = [\xi_1, \eta_1, \ldots, \xi_N,\eta_N]^\rT
$
consists of $2N$ Gaussian quantum variables from (\ref{zeta})  with zero mean and the quantum covariance matrix
\begin{equation}
\label{KN}
    K_N
     :=
    \bE(Z_NZ_N^\rT)
    =
    P_N
    +
    \frac{i}{2} I_N\ox \bJ,
    \qquad
        P_N:= (P_{jk})_{1\< j,k\< N},
\end{equation}
in view of (\ref{zetacomm}), (\ref{covzeta}), (\ref{Ezz}).
Also, we will use
 auxiliary matrices
\begin{equation}
\label{PhiPsi}
      \Phi_N:=
    I_N
    \ox
    {\begin{bmatrix}
      1 & 0\\
      0 & 1\\
      1 & 0
    \end{bmatrix}},
    \qquad
    \Psi_N
    :=
    \diag_{1\< k\< N}
    (\alpha_k, \beta_k, \alpha_k),
    \qquad
    \Ups :=
    {\begin{bmatrix}
      0 & 1 & 0\\
      1 & 0 & -1\\
      0 & -1 & 0
    \end{bmatrix}},
\end{equation}
the second of which is a diagonal matrix of order $3N$ associated with (\ref{sympfactk}).

\begin{thm}
\label{th:QEF}
Suppose the risk-sensitivity parameter $\theta>0$ is small enough in the sense that
\begin{equation}
\label{mhoReL}
    r_N:=
    \br\Big(P_N\diag_{1\< k\< N}(2\alpha_k, \beta_k)\Big) < 1,
\end{equation}
where $\br(\cdot)$ is the spectral radius. Then the truncated QEF in (\ref{XiQN}) can be computed as
\begin{equation}
\label{XiNex}
    \Xi_N
  =
  \frac{1}{\sqrt{\det \Gamma_N}},
  \qquad
  \Gamma_N
  :=
  I_{3N} - \big(\Phi_N P_N\Phi_N^\rT  + \frac{i}{2} I_N\ox \Ups\big) \Psi_N
  \end{equation}
in terms of (\ref{Ezz}), (\ref{sympfactk})--(\ref{PhiPsi}).\hfill$\square$
\end{thm}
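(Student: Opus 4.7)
The plan is to reduce the expectation $\Xi_N=\bE\re^{\theta Q_N}$ to a direct application of the Gaussian quantum characteristic function identity of \cite{VPJ_2018c,VPJ_2019b}, after recasting the exponent in a form adapted to that formula. As a preliminary I would establish the single-mode operator identity
$
\re^{\theta\lambda_k(\xi_k^2+\eta_k^2)} \,=\, \re^{\alpha_k\xi_k^2/2}\re^{\beta_k\eta_k^2/2}\re^{\alpha_k\xi_k^2/2}
$
with $\alpha_k,\beta_k$ as in (\ref{sympfactk}). This is the symplectic ``LUL'' factorisation of \cite[Eqs.~(49)--(51)]{VPJ_2018c}: one writes $2\theta\lambda_k I_2$ as a congruence of a diagonal matrix by a lower-triangular symplectic matrix and lifts this decomposition to the metaplectic representation. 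Combining these identities across $k=1,\ldots,N$ using the interpair commutativity $[\zeta_j,\zeta_k^\rT]=0$ for $j\ne k$ from (\ref{zetacomm}) then yields (\ref{eQN}).

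Next I would interpret the ordered product in (\ref{eQN}) as a single Gaussian exponential $\re^{Y_N^\rT\Psi_N Y_N/2}$ in the triplicated vector $Y_N:=\Phi_N Z_N$, whose components within each $k$-block are $\xi_k,\eta_k,\xi_k$ in the left-to-right order dictated by the factorisation. Since $X(t)$, and hence $Z_N$, is in a zero-mean joint Gaussian quantum state with covariance $K_N$ in (\ref{KN}), the vector $Y_N$ is jointly Gaussian as well. Its slot-ordered covariance, obtained by taking each entry of $\bE(Y_NY_N^\rT)$ with the two factors placed in the prescribed left-to-right order, splits as the real symmetric part $\Phi_N P_N\Phi_N^\rT$ coming from (\ref{Ezz}) plus an imaginary contribution $\tfrac{i}{2}I_N\ox\Ups$ recording the ordered products between the three slots $\xi_k,\eta_k,\xi_k$ within each block (via $[\xi_k,\eta_k]=i$ and trivial same-slot commutators). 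The specific pattern of $\pm 1$'s and $0$'s in $\Ups$ from (\ref{PhiPsi}) is precisely this slot-level ordering structure.

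The main step is then to apply the Gaussian quantum expectation formula of \cite{VPJ_2019b},
\[
\bE\re^{Y_N^\rT\Psi_N Y_N/2} \,=\, \frac{1}{\sqrt{\det(I_{3N} - \wt K_N\Psi_N)}},\qquad \wt K_N := \Phi_N P_N\Phi_N^\rT + \tfrac{i}{2}I_N\ox\Ups,
\]
which upon substitution is exactly (\ref{XiNex}). Finiteness of the expectation and positivity of the determinant under the square root reduce to verifying $\det\Gamma_N>0$ under hypothesis (\ref{mhoReL}): the real part $\Phi_N P_N\Phi_N^\rT\Psi_N$ has the same non-zero spectrum as $P_N\diag_{1\<k\<N}(2\alpha_k,\beta_k)$ (of spectral radius $r_N<1$), and a perturbation estimate controls the contribution of the imaginary block.

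The step I expect to be the main obstacle is the correct identification of the slot-ordered contribution $\tfrac{i}{2}I_N\ox\Ups$: the naive pull-back $\Phi_N(I_N\ox\bJ)\Phi_N^\rT$ is antisymmetric and does not match the symmetric $\Ups$, and extracting the correct matrix requires genuinely tracking the three-slot ordering $(\xi_k,\eta_k,\xi_k)$ within each block through the Weyl/characteristic-function calculus, rather than treating $Y_N$ as a fresh Gaussian vector in isolation. Once this identification is in place, the remainder is a routine substitution into the already-established Gaussian expectation formula.
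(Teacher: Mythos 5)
Your proposal is correct and follows essentially the same route as the paper's proof: both reduce the truncated QEF, via the factorisation (\ref{eQN}) and the interpair commutativity in (\ref{zetacomm}), to the Gaussian-state formula of \cite[Theorem~2]{VPJ_2018c} (see \cite[Theorem~4]{VPJ_2019b}), and your ``slot-ordered covariance'' $\Phi_N P_N\Phi_N^\rT + \frac{i}{2}I_N\ox\Ups$ is precisely the paper's symmetrised matrix $(\Phi_N K_N\Phi_N^\rT)^{\diam}$, where $D^\diam$ inherits the upper-triangular part of $D$. The paper merely packages your three-slot ordering bookkeeping in this $\diam$ operation and justifies the hypothesis (\ref{mhoReL}) by the same cyclic-spectrum identity $\br(\Phi_N P_N\Phi_N^\rT\Psi_N)=\br(P_N\Phi_N^\rT\Psi_N\Phi_N)=r_N$ with $\Phi_N^\rT\Psi_N\Phi_N=\diag_{1\< k\< N}(2\alpha_k,\beta_k)$ that you invoke.
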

\begin{proof}
Application of \cite[Theorem~2]{VPJ_2018c} (see also \cite[Theorem~4]{VPJ_2019b}) leads to
\begin{equation}
\label{XiNex1}
    \Xi_N
    =
  \frac{1}{\sqrt{\det(I_{3N} - (\Phi_N K_N\Phi_N^\rT)^{\diam} \Psi_N )}},
\end{equation}
where a matrix $D:= (d_{jk})_{1\< j,k\< s}$ is mapped linearly to a symmetric matrix  $D^\diam$ of the same order 
which inherits its upper triangular part (including the main diagonal) from $D$, so that its entries are $(D^\diam)_{jk}
    :=
    {\small\left\{
        {\small\begin{matrix}
            d_{jk} &  {\rm if}\  j \< k\\
            d_{kj} &  {\rm if}\  j > k
        \end{matrix}}
    \right.}
$.
Now, (\ref{KN}), (\ref{PhiPsi})  imply that
$
    (\Phi_N K_N \Phi_N^\rT)^{\diam} = (\Phi_N P_N \Phi_N^\rT)^\diam + \frac{i}{2} (\Phi_N(I_N\ox \bJ)\Phi_N^\rT)^{\diam} = \Phi_N P_N \Phi_N^\rT + \frac{i}{2} I_N \ox \Ups
$, and its substitution into (\ref{XiNex1}) leads to (\ref{XiNex}). Also, (\ref{mhoReL}) employs
the property
$
    \br(\Phi_N P_N \Phi_N^\rT \Psi_N) = \br(P_N \Phi_N^\rT \Psi_N \Phi_N ) = r_N
$ for the matrices 
whose spectra differ only by zeros, together with
the diagonal matrix $\Phi_N^\rT \Psi_N \Phi_N =     \diag_{1\< k\< N}
    (2\alpha_k, \beta_k)$  of order $2N$ whose entries are given by (\ref{sympfactk}).
\end{proof}

For any $N=1,2,3,\ldots$, the matrix $\Gamma_N$ in (\ref{XiNex}) is a submatrix (of order $3N$) of  the next matrix  $\Gamma_{N+1}$. Therefore, a combination of (\ref{XiQN}) with (\ref{XiNex}) leads to a recursive representation of the QEF in the form
\begin{equation}
\label{Xiln}
    \ln \Xi = -\frac{1}{2}\Big(\ln\det \Gamma_1 + \sum_{N=1}^{+\infty}\ln \det \Gamma_{N+1\mid N}\Big),
\end{equation}
where $\Gamma_{N+1\mid N}$ denotes the Schur complement \cite{HJ_2007} of the block $\Gamma_N$ in $\Gamma_{N+1}$. The convergence of the series (\ref{Xiln}) is guaranteed if the quantity $r_N$ on the left-hand side of (\ref{mhoReL}) (which is a nondecreasing function of $N$ and depends on the risk-sensitivity parameter  $\theta$ in a nonlinear fashion through (\ref{sympfactk})) satisfies
$
    \lim_{N\to +\infty}
    r_N <1
$.

\section{CONCLUSION}
\label{sec:conc}

We have considered a modified version of the quantum Karhunen-Loeve expansion for the system variables of a one-mode OQHO
using the eigenvalues and eigenfunctions for their two-point CCRs. This eigenbasis is closely related to that for the covariance kernel of a classical Ornstein-Uhlenbeck process. The coefficients of the modified QKL expansion are organised as conjugate position-momentum pairs with interpair commutativity. Their statistical properties are more complicated and are studied for the case of the invariant multipoint Gaussian state of the OQHO driven by vacuum  input  fields. The QKL representation of the system variables has been applied to computing the QEF as a finite-horizon robust performance criterion for linear quantum stochastic systems.

\section*{ACKNOWLEDGEMENT}

IGV thanks Professor Robin L. Hudson for pointing out the reference \cite{IT_2010} and for other
useful discussions in the context of 
\cite{CH_2013,H_2018,VPJ_2019b}.


\end{document}